\renewcommand{\epsilon}{\varepsilon}
\newcommand{\nats}{{\mathbb N}}
\def\A{\mathbb{A}}
\def\AP{\textit{AP\,}}
\def\d{\textit{\underbar{d}\,}}
\newtheorem{theorem}{Theorem}
\newtheorem{proposition}[theorem]{Proposition}
\newtheorem{lemma}[theorem]{Lemma}
\newtheorem{corollary}[theorem]{Corollary}
\newtheorem{definition}[theorem]{Definition}
\begin{document}

\sloppy


\begin{frontmatter}

\title{Anti-Powers in Infinite Words
}

\author[label1]{Gabriele Fici\corref{cor1}}
\ead{gabriele.fici@unipa.it}

\author[label1]{Antonio Restivo}
\ead{antonio.restivo@unipa.it}

\author[label2]{Manuel Silva}
\ead{mnasilva@gmail.com}

\author[label3]{Luca Q. Zamboni}
\ead{zamboni@math.univ-lyon1.fr}

\address[label1]{Dipartimento di Matematica e Informatica, Universit\`a di Palermo\\Via Archirafi 34, 90123 Palermo, Italy}
\address[label2]{Faculdade de Ci\^encias e Tecnologia,
Universidade Nova de Lisboa\\  Quinta da Torre, Campus Universit\'ario, 2829-516 Caparica, Portugal}
\address[label3]{Univ Lyon, Universit\'e Claude Bernard Lyon 1, CNRS UMR 5208, Institut Camille Jordan\\ 43 blvd. du 11 novembre 1918, F-69622 Villeurbanne cedex, France}

\cortext[cor1]{Corresponding author.}

\journal{Journal of Combinatorial Theory, Series A}

\begin{abstract}
In combinatorics of words, a concatenation of $k$ consecutive equal blocks is called a power of order $k$. In this paper we take a different point of view and define an anti-power of order $k$ as a concatenation of $k$ consecutive pairwise distinct blocks of the same length. As a main result, we show that every infinite word contains powers of any order or anti-powers of any order. That is, the existence of powers or anti-powers is an unavoidable regularity. Indeed, we prove a stronger result, which relates the density of anti-powers to the existence of a factor that occurs with arbitrary exponent. 
As a consequence, we show that in every aperiodic uniformly recurrent word, anti-powers of every order begin at every position.
We further show that every infinite word avoiding anti-powers of order $3$ is ultimately periodic, while there exist aperiodic words avoiding anti-powers of order $4$. We also show that there exist aperiodic recurrent words avoiding anti-powers of order $6$.
\end{abstract}

\begin{keyword}
Anti-power; unavoidable regularity; infinite word.
\end{keyword}

\end{frontmatter}

\section{Introduction}
Ramsey theory is an old and important area of combinatorics. Since the original result of 1930 by F.~Ramsey \cite{Ram30} research developed in several directions, but the crux of the matter remains the study of regularities that  must arise in large combinatorial structures. These kinds of regularities classically concern substructures formed by \emph{all-equal} elements (e.g., a monochromatic clique in an edge-colored graph).

However, at the beginning of the 70s, Erd{\H{o}}s, Simonovits and T. S{\'o}s \cite{ESS75} started the study of \emph{anti-Ramsey} theory, that is, the study of regularities that concern \emph{all-distinct} objects (e.g., a subgraph of an edge-colored graph in which all edges have different colors, often called a \emph{rainbow} --- see \cite{Fujita2010} for a survey).

In combinatorics on words, Ramsey theory found applications through some important results stating the existence of unavoidable regularities. Formally, an \emph{unavoidable regularity} is a property $P$ such that it is not possible to construct arbitrarily long words not satisfying $P$ (cf.~\cite{DelVa99}). Most of the main results about unavoidable regularities in words were originally stated in other areas of combinatorics, e.g., the Ramsey, van der Waerden and  Shirshov theorems (see~\cite{DelVa99,LothI,LothII} for further details). All these theorems, however, establish the existence, in every sufficiently long word, of regular substructures. In this paper, we give an anti-Ramsey result in the context of combinatorics on words.

Regularities in words are often associated with \emph{repetitions}, also called \emph{powers}. A power of order $k$ is a concatenation of $k$ identical copies of the same word. The most simple power is a square (a power of order $2$). As noted by Thue in 1906 \cite{Th06}, every sufficiently long binary word must contain a square, but there exist arbitrarily long words over a $3$-letter alphabet avoiding squares, that is, not containing any square as a block of contiguous letters (in terms of combinatorics on words, a \emph{factor}). This shows that the avoidability of powers depends on the alphabet size.

In this paper we introduce the notion of an anti-power. An {anti-power of order $k$}, or simply a \emph{$k$-anti-power}, is a concatenation of $k$ consecutive pairwise distinct words of the same length.  E.g., $aabaaabbbaba$ is a $4$-anti-power. A simple computation shows that there are in general much more anti-powers than powers for a fixed length and a fixed order; yet there are much less anti-powers than possible factors of the same given length. 

 Let us consider as an example the well-known Thue-Morse word \[t=0110100110010110100101100110100110010110011010\cdots\]
Starting from $n=0$, the $n$-th term is given by the parity of the number of $1$s in the binary expansion of $n$. 
The Thue-Morse word does not contain \emph{overlaps}, i.e., factors of the form $awawa$ for a letter $a$ and a word $w$.  
In particular, the Thue-Morse word does not contain $3$-powers.

\begin{table}
\centering  
\begin{small}
\begin{raggedright}
\begin{tabular}{c *{30}{@{\hspace{1.3mm}}c}}
$k$\hspace{2mm} & 3 & 4 & 5 & 6 & 7 & 8 & 9 & 10 & 11 & 12 & 13 & 14 & 15 & 16 & 17 & 18 & 19 & 20 & \  & 30 & \  &  50 & \  &  100
\\
\hline \\
length\hspace{2mm}  &15 & 20 & 25 & 30 & 77 & 88 & 99 & 110 & 121 & 132 & 143 & 154 & 195 & 208 & 221 & 234 & 247 & 260 & & 870 & & 2450 & & 9700
\\
\hline \\
ratio\hspace{2mm}  &5 & 5 & 5 & 5 & 11 & 11 & 11 & 11 & 11 & 11 & 11 & 11 & 13 & 13 & 13 & 13 & 13 & 13 
&& 29 && 49 && 97\\
\hline \rule[0pt]{0pt}{12pt}
\end{tabular}
\end{raggedright}\caption{\label{tab:tm} The first few values of the sequence of  lengths of the shortest prefixes of the Thue-Morse word that are $k$-anti-powers.}
\end{small}
\end{table}

The shortest prefix of the Thue-Morse word that is a $2$-antipower is $01$. The shortest prefix that is a $3$-anti-power is $01101\cdot 00110\cdot 01011$, of length $15$. One can verify that the shortest $4$-anti-power prefix has length $20$. The first few lengths of the shortest prefixes of $t$ that are $k$-anti-powers for different values of $k$ are displayed in Table~\ref{tab:tm}. A natural question is therefore the following: Given an integer $k>1$, is it always possible to find a prefix of $t$ that is a $k$-anti-power? In this paper we answer this question in the affirmative. Actually, we prove a much stronger result. Indeed, we prove that the existence of powers of any order or anti-powers of any order is an unavoidable regularity:

\begin{theorem}\label{thm:main}
Every infinite word contains powers of any order or anti-powers of any order.
\end{theorem}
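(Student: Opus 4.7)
I would prove Theorem~\ref{thm:main} by contrapositive: if $w$ has bounded exponent (i.e., contains no $K$-power for some fixed $K$), then $w$ contains $k$-anti-powers for every $k$. Equivalently, assuming $w$ contains no $k$-anti-power for some fixed $k$, I aim to produce factors of $w$ of arbitrarily high exponent, contradicting the bounded-exponent hypothesis. This is in line with the stronger quantitative result announced in the abstract relating anti-power density to the existence of a factor of arbitrarily large exponent.

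The starting point is local. For every $n \geq 1$, the prefix of $w$ of length $kn$ is not a $k$-anti-power, so among the length-$n$ blocks $b_\ell(n) := w[\ell n \ldots (\ell+1)n - 1]$ with $0 \leq \ell \leq k-1$, some pair $b_i(n) = b_j(n)$ coincides. This produces a factor of $w$ of length $(j - i + 1)n$ with period $(j - i)n$, hence of exponent only $1 + 1/(j - i) \leq 1 + 1/(k - 1)$. Such a single factor is too weak, so the plan is to amplify by combining periodic factors coming from many values of $n$.

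For the global step I would apply pigeonhole on the finite set $\binom{[k]}{2}$: some fixed pair $(i, j)$ with gap $d := j - i$ is the coinciding pair for an infinite set $N \subseteq \nats$ of upper density $\geq 1/\binom{k}{2}$. Positive upper density supplies $n_1 < n_2 \in N$ with $n_2/n_1$ arbitrarily close to $1$, so that the intervals $[i n_1, (j+1)n_1]$ and $[i n_2, (j+1)n_2]$ overlap in a subfactor of $w$ of length close to $(d+1) n_1$. This subfactor carries two periods $d n_1$ and $d n_2$, which Fine--Wilf should collapse to a common period $d \gcd(n_1, n_2)$. Iterating within $N$ and progressively shrinking the gcd would then give factors of $w$ of unbounded exponent, the sought contradiction.

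The main obstacle is that a naive two-term Fine--Wilf step is marginally insufficient when $d \geq 2$: the overlap length just barely exceeds one of the two periods but not their sum minus the gcd. The resolution should exploit the no-anti-power condition at all positions $p$, not only at $p = 0$, and run an infinite Ramsey coloring on pairs of block lengths with color encoding the pair $(i, j)$ together with alignment data, extracting a monochromatic family of periodic factors all nested inside a common long factor of $w$. Iterating Fine--Wilf, or combining several of these periods simultaneously, then forces a bounded common period inside a factor whose length grows without bound, producing the factor of arbitrarily high exponent needed to close the contradiction.
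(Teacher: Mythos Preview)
Your setup matches the paper's: pigeonhole on the pair $(i,j)$ over several block lengths, then combine two lengths $r<s$ sharing the same pair. The gap is in the amplification. Fine--Wilf on the large overlap, aiming for period $d\gcd(r,s)$, fails for $d\geq 2$ exactly as you say, and the Ramsey-type patch you sketch (other starting positions, colourings of pairs of lengths, iterated gcd reduction) is both vague and unnecessary. The paper uses only the prefix condition and a single, elementary step.

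The observation you are missing is a direct border argument on a \emph{short} word rather than a Fine--Wilf argument on the long overlap. With $U_{\ell,n}=x_{\ell n+1}\cdots x_{(\ell+1)n}$ the $\ell$-th block of length $n$, set $w=x_{is+1}\cdots x_{(i+1)r}$ and $v=x_{js+1}\cdots x_{(j+1)r}$. Then $w$ is simultaneously a prefix of $U_{i,s}$ and a suffix of $U_{i,r}$, and likewise $v$ is a prefix of $U_{j,s}$ and a suffix of $U_{j,r}$; since $|v|<|w|$, the equalities $U_{i,s}=U_{j,s}$ and $U_{i,r}=U_{j,r}$ force $v$ to be both a prefix and a suffix of $w$. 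Hence $w$ has period $|w|-|v|=(j-i)(s-r)$, while $|w|=r-i(s-r)$ is essentially $r$. So the relevant small period is $d(s-r)$, not $d\gcd(r,s)$: it is automatically bounded by $(k-1)\binom{k}{2}$ once $r$ and $s$ are drawn from $\binom{k}{2}+1$ \emph{consecutive} integers outside $\AP(x,k)$ (available arbitrarily far out since under your hypothesis $\AP(x,k)=\emptyset$), and $|w|\approx r\to\infty$ then yields $u^l$ for every $l$. No Fine--Wilf, no shifted starting positions, no infinite Ramsey.
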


Note that we do not make any hypothesis on the alphabet size. Actually, we prove a stronger result, from which Theorem \ref{thm:main} follows. Given an infinite word $x$, we prove that if for some integer $k$ the lower density of the set of lengths $n$ for which the prefix of $x$ of length $kn$ is a $k$-anti-power is smaller than one, then there exists a word (whose length depends on $k$) that occurs in $x$ with arbitrary exponent (Theorem \ref{prop:luca}). This implies that if an infinite word $x$ has the property that each of its factors appears with bounded exponent (in the terminology of combinatorics on words, an \emph{$\omega$-power-free} word), then 
$x$ must begin in an anti-power of order $k$ for every choice of $k.$
In particular, since a uniformly recurrent word is either purely periodic or  $\omega$-power-free, this property holds for every aperiodic uniformly recurrent word, as for example the Thue-Morse word or any Sturmian word\footnote{Sturmian words are aperiodic words of minimal factor complexity. They are very well studied objects in combinatorics on words (see for instance \cite{LothII}).}.

In the second part of the paper, we focus on the avoidability of anti-powers. We show that any infinite word avoiding $3$-anti-powers is ultimately periodic, and that there exist aperiodic words avoiding $4$-anti-powers. We also show that there exist aperiodic \emph{recurrent} words avoiding $6$-anti-powers. We leave it as an open question to determine whether there exist aperiodic recurrent words avoiding $4$-anti-powers or $5$-anti-powers.

We conclude with final considerations and discuss open problems and further possible directions of investigation.

\section{Preliminaries}

Let $\nats=\{1,2,3,\ldots\}$ be the set of natural numbers.  Let $\A$ be a (possibly infinite) non-empty set, called the \emph{alphabet}, whose elements are called \emph{letters}.  
A \emph{word} over $\A$ is a finite or infinite sequence of letters from $\A$. The \emph{length} $|u|$ of a finite word $u$ is the number of its letters. The \emph{empty word} $\epsilon$ is the word of length $0$. We let $\A^+$ denote the set of all finite words of positive length over $\A$, and $\A^\nats$  the set of all infinite words over $\A$, that is, the set of all maps from $\nats$ to $\A$. Given a finite word $u$, we let $u^\omega$ denote the infinite word $uuu\cdots$ obtained by concatenating an infinite number of copies of $u$.

Given a finite or infinite word $x$, we say that a word $u$ is a \emph{factor} of $x$ if $x=vuy$ for some (possibly empty) words $v$ and $y$. We say that $u$ is a \emph{prefix} (resp.~\emph{suffix}) of $x$ if $x=uy$ (resp.~$x=yu$) for some word $y$. We say that a word $u\neq x$ is a \emph{border} of $x$ if $u$ is both a prefix and a suffix of $x$.

An infinite word $x$ is \emph{purely periodic} if there exists a positive integer $p$ such that the letters occurring at positions $i$ and $j$ coincide whenever $i=j\mod p$. Equivalently, $x$ is purely periodic if and only if $x=u^\omega$ for some word $u$ of length $p$. An infinite word $x$ is \emph{ultimately periodic} if $x=uy$ for a finite word $u$ and a purely periodic word $y$. An infinite word is \emph{aperiodic} if it is not ultimately periodic.

An infinite word $x$ is said to be \emph{recurrent} if every finite factor of $x$ occurs in $x$ infinitely often. Equivalently, $x$ is recurrent if and only if every finite prefix of $x$ has a second occurrence as a factor. An infinite word $x$ is said to be \emph{uniformly recurrent} if every finite factor of $x$ occurs syndetically  (that is, it occurs infinitely often and with bounded gaps). Equivalently, $x$ is uniformly recurrent if and only if for every finite factor $u$ of $x$ there exits an integer $m$ such that $u$ occurs in every factor of $x$ of length $m$. 

An infinite word $x$ is said to be \emph{$k$-power-free} for some integer $k>1$ if for every finite factor $u$ of $x$, one has that $u^k$ is not a factor of $x$. An infinite word $x$ is said to be \emph{$\omega$-power-free} if for every finite factor $u$ of $x$ there exists a positive integer $l$ such that $u^l$ is not a factor of $x$. Of course, if a word is $k$-power-free for some integer $k$, then it is $\omega$-power-free, but the converse is not always true.

An important relationship between uniformly recurrent and $\omega$-power-free words is the following (see for instance \cite{DelVa99}):

\begin{theorem}\label{thm:del}
 Every uniformly recurrent word is either purely periodic or $\omega$-power-free.
\end{theorem}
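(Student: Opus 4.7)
The plan is to prove the contrapositive direction of the disjunction: assume $x$ is uniformly recurrent and \emph{not} $\omega$-power-free, and conclude that $x$ is purely periodic. By the negation of $\omega$-power-freeness there is a factor $u$ of $x$ such that $u^l$ is a factor of $x$ for every $l\geq 1$. Set $p=|u|$. The goal is to show that $x_i=x_{i+p}$ for every $i\geq 1$, which is exactly pure periodicity.

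Next I would exploit uniform recurrence to locate an arbitrarily long prefix of $x$ inside some $u^l$. Fix $N\geq p+1$ and let $P_N$ denote the prefix of $x$ of length $N$. By uniform recurrence applied to $P_N$, there exists an integer $m_N$ such that $P_N$ occurs in every factor of $x$ of length $m_N$. Now pick $l$ so large that $lp\geq m_N$; since $u^l$ is a factor of $x$ by assumption, $P_N$ must occur as a factor of $u^l$. Consequently $P_N$ is a factor of the purely periodic infinite word $u^\omega$, and therefore inherits the period $p$, i.e.\ $x_i=x_{i+p}$ for $1\leq i\leq N-p$.

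The conclusion is now immediate: since $N$ was arbitrary, letting $N\to\infty$ yields $x_i=x_{i+p}$ for every $i\geq 1$, so $x$ is purely periodic with period dividing $p$. (Alternatively, fixing the starting offset modulo $p$ at which $P_N$ sits inside $u^\omega$, the pigeonhole principle guarantees one offset that works for infinitely many $N$, and by compatibility all of $x$ is then a shift of $u^\omega$.)

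I do not foresee a substantial obstacle in carrying out this argument: the proof is essentially a marriage of the two definitions. The one subtlety worth stating carefully is the step that converts ``$P_N$ is a factor of $u^l$'' into ``$P_N$ has period $p$'', which rests on the elementary observation that every factor of a word of period $p$ itself has period $p$. Everything else is bookkeeping.
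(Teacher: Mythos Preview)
Your argument is correct and is the standard one. Note, however, that the paper does not actually supply a proof of this theorem: it is quoted as a known result with a reference to \cite{DelVa99}, so there is no ``paper's own proof'' to compare against. Your write-up would serve perfectly well as the missing justification; the only step that deserves a word of care is exactly the one you flag, namely that any factor of $u^l$ inherits the period $|u|$, and you handle it correctly.
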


\section{Unavoidability of powers or anti-powers}

In order to state our main result, we need to introduce some definitions.

Let $x$ be an infinite word and $k\in \nats$. We set
\[P(x,k)=\{m\in \nats \mid \mbox{ the prefix of $x$ of length $km$ is a $k$-power} \}\]
Analogously, we set
\[\AP(x,k)=\{m\in \nats \mid \mbox{ the prefix of $x$ of length $km$ is a $k$-anti-power} \}\]

Note that $P(x,1)=\AP(x,1)=\nats$ and that $P(x,k)\cap \AP(x,k)=\emptyset$ for every $k\geq 2.$ 
For example, if $x=01^\omega,$ we have $P(x,k)= \AP(x,k)=\emptyset$ for every $k\geq 3.$ 

Recall that for any subset $X\subseteq \nats$, the \emph{lower density} of $X$ is defined by 
\[\d(X)=\liminf_{n\rightarrow \infty} \frac{|X \cap \{1,2,\ldots ,n\}|}{n}.\]
Note that if $X$ is finite, then $\d(X)=0$. Moreover, if $\d(X)< 1/t$ for some integer $t>0$, then  there exist infinitely many integers $m$ such that $\{m,m+1,\ldots ,m+t-1\}\subset  \nats \setminus X.$

We are now going to prove our main result (Theorem \ref{prop:luca}). 
We first need a technical lemma.

\begin{lemma}\label{lem:antonio}
Let $v$ be a border of a word $w$ and let $u$ be the word such that $w=uv$. If $l$ is an integer such that $|w|\geq l|u|$, then $u^l$ is a prefix of $w$.
\end{lemma}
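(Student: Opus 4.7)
The plan is to reinterpret the hypothesis ``$v$ is a border of $w$'' as a periodicity statement and then extract the prefix $u^l$ one block at a time using that periodicity. Specifically, since $v$ is both a prefix and a suffix of $w$ and $w=uv$, the word $w$ has period $p=|w|-|v|=|u|$: for every index $i$ with $1\le i\le |w|-|u|$, one has $w[i]=w[i+|u|]$, because position $i$ falls inside the prefix occurrence of $v$ and position $i+|u|$ falls inside the suffix occurrence of $v$, corresponding to the same position in $v$.

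Given this, I would prove by induction on $l$ the statement ``if $l|u|\le|w|$, then $u^l$ is a prefix of $w$.'' The base case $l=1$ is immediate from $w=uv$. For the inductive step, assume $u^l$ is a prefix of $w$ and $(l+1)|u|\le |w|$; I need to show that the factor of $w$ occupying positions $l|u|+1,\ldots,(l+1)|u|$ equals $u$. For each $i\in\{1,\ldots,|u|\}$, the index $l|u|+i$ satisfies $l|u|+i\le (l+1)|u|\le|w|$, so iterating the periodicity relation $l$ times gives
\[
w[l|u|+i]=w[(l-1)|u|+i]=\cdots=w[i]=u[i],
\]
where each application is legal because every intermediate index remains in the range $[1,|w|-|u|]$. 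Hence positions $l|u|+1$ through $(l+1)|u|$ spell out $u$, completing the induction and giving the conclusion for the chosen $l$.

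There is no real obstacle here: the only thing to watch is that the indices used in the periodicity chain all lie in the legal range $[1,|w|-|u|]$, which is exactly guaranteed by the hypothesis $l|u|\le |w|$. The argument is entirely elementary and uses no machinery beyond the definition of period; the content of the lemma is simply that a border of length $|w|-|u|$ forces the prefix to be the maximal power of $u$ that fits inside $w$.
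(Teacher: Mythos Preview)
Your proof is correct. Both you and the paper argue by induction on $l$, but the mechanics differ: the paper works purely at the word level, observing that if one writes $v=uv'$ then $v'$ is again a border of $v$, and then applies the inductive hypothesis to the smaller pair $(v,v')$ with the same $u$; you instead pass through the standard equivalence between ``$v$ is a border of $w$'' and ``$w$ has period $|u|=|w|-|v|$,'' and then read off each length-$|u|$ block letter by letter via the period relation. In your argument the inductive hypothesis is barely used (the periodicity chain already forces every block to equal $u$), so your proof is essentially a direct computation once the period is established; the paper's is a genuine structural recursion. Both are equally elementary; yours has the virtue of making explicit that the lemma is nothing more than the border--period correspondence.
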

\begin{proof}
By induction on $l$. For $l=1$ the statement trivially holds. Suppose $l>1$. Since $u$ is shorter than $v$ and both are prefixes of $w$, we have that $u$ is a prefix of $v$. Let us write $v=uv'$. Then $w=uuv'$ and $v'$ is a border of $v$. Since $|v|=|w|-|u|\geq (l-1)|u|$, we can apply the induction hypothesis and derive that $u^{l-1}$ is a prefix of $v$, whence $u^l$ is a prefix of $w$.
\end{proof}

\begin{theorem}\label{prop:luca} 
Let $x$ be an infinite word. Suppose that \[\d(\AP(x,k))<\left(1+{k\choose 2}\right)^{-1}=\frac{2}{2+k(k-1)}\] for some $k\in \nats.$ Then there exists $u\in \A^+$ with
$|u|\leq (k-1){k\choose 2}$ such that $u^l$ is a factor of $x$ for every $l\geq 1.$ 
\end{theorem}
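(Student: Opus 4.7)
The plan is to use the density hypothesis to produce, at infinitely many positions in $x$, a long factor with small period $pd$ (for some $p\le k-1$ and $d\le\binom{k}{2}$); after a final pigeonhole, its first $pd$ letters give the desired $u$.

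First I would apply two pigeonholes. Since $\d(\AP(x,k))<1/(1+\binom{k}{2})$, the remark preceding the theorem provides infinitely many intervals of $1+\binom{k}{2}$ consecutive integers $n$ lying outside $\AP(x,k)$: for each such $n$ the prefix of length $kn$ has two equal length-$n$ blocks, at positions $i(n)<j(n)$. Because only $\binom{k}{2}$ pairs $(i,j)$ are available, in each window two values $n<n'$ share a pair $(i,j)$, with $d:=n'-n\in\{1,\ldots,\binom{k}{2}\}$. A second pigeonhole across the windows, over the finite set of triples $(i,j,d)$, fixes such a triple; write $p=j-i$, and let $n_1<n_2<\cdots\to\infty$ be the resulting sequence, so that both the length-$n_s$ and the length-$(n_s+d)$ decomposition of the corresponding prefix of $x$ have equal $i$-th and $j$-th blocks.

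Next I would combine the two block-equalities to extract a short period. Written position-wise, they say $x[m]=x[m+pn_s]$ for $m\in[(i-1)n_s+1,in_s]$ and $x[m]=x[m+p(n_s+d)]$ for $m\in[(i-1)(n_s+d)+1,i(n_s+d)]$. For $n_s$ large these ranges intersect, and on the intersection both equalities involve the same $x[m]$, yielding $x[m+pn_s]=x[m+pn_s+pd]$. Simplifying the endpoints, this says that the factor
\[
F_s:=x\bigl[(j-1)n_s+(i-1)d+1\,\ldots\,jn_s+pd\bigr]
\]
of length $n_s+(p-i+1)d$ has period $pd$; equivalently, it has a border of length $|F_s|-pd$. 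Applying Lemma~\ref{lem:antonio} to this decomposition yields $u_s^{l_s}$ as a prefix of $F_s$, and hence a factor of $x$, where $u_s$ is the first $pd$ letters of $F_s$ and $l_s=\lfloor|F_s|/(pd)\rfloor\to\infty$.

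Each $u_s$ has the same fixed length $pd\le(k-1)\binom{k}{2}$, so a final pigeonhole over the (finitely many) factors of $x$ of that length yields a single $u$ that occurs as $u_s$ for infinitely many $s$; since the associated $l_s$ are unbounded, $u^l$ is a factor of $x$ for every $l\ge1$, completing the proof. The main obstacle is the short-period extraction in the middle step: Fine--Wilf applied to the natural factors of periods $pn_s$ and $p(n_s+d)$ is too weak, because their common sub-factor has length smaller than $pn_s+p(n_s+d)-p\gcd(n_s,n_s+d)$; the correct move is the direct position-wise cancellation that collapses the two large periods $pn_s$ and $p(n_s+d)$ into the single small period $pd$.
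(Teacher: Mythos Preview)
Your argument is correct and essentially coincides with the paper's: both pick a run of $1+\binom{k}{2}$ consecutive lengths outside $\AP(x,k)$, pigeonhole on the $\binom{k}{2}$ pairs $(i,j)$ to find two such lengths sharing a pair, combine the two block equalities to produce a factor with period $(j-i)(s-r)=pd\le(k-1)\binom{k}{2}$, apply Lemma~\ref{lem:antonio}, and finish with a pigeonhole over short words. The only differences are cosmetic: the paper fixes $l$ first and extracts the periodic factor near the $i$-th block directly in border language (``$v$ is a border of $w$''), whereas you insert an extra pigeonhole across infinitely many windows to fix the triple $(i,j,d)$ and extract the factor near the $j$-th block via the position-wise cancellation $x[m+pn_s]=x[m+pn_s+pd]$.
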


\begin{proof}  
Fix $k$ such that $\d(\AP(x,k))<(1+{k\choose 2})^{-1}.$ Since $\AP(x,1)=\nats,$ and the lower density of $\nats$ is $1$, we have $k\geq 2.$ We set $M=(k-1){k\choose 2}.$ 
We have to show there exists $u\in \A^+$ with $|u|\leq M$ such that $u^l$ is a factor of $x$ for every $l\geq 1.$  By the pigeonhole principle, it suffices to show that for every  $l\in \nats$ there exists $u\in \A^+$ with $|u|\leq M$ such that $u^l$ is a factor of $x.$ 

So, let us fix $l\in \nats, $ and set $N=(l+1)M.$ 
Since $\d(\AP(x,k))<(1+{k\choose 2})^{-1},$  there exists an integer  $m> N$ such that $\{m,m+1,\ldots ,m+{k\choose 2}\}\subset  \nats \setminus \AP(x,k).$

For every $j$ and $r$ such that $0\leq j\leq k-1$ and $m\leq r\leq m+{k\choose 2}$, set
\[U_{j,r}=x_{jr+1}x_{jr+2}\cdots x_{(j+1)r},\] so that $|U_{j,r}|=r$ and $U_{0,r},U_{1,r},\ldots , U_{k-1,r}$ are the first $k$ consecutive blocks of $x$ of length $r.$ Thus for each $m\leq r\leq m+{k\choose 2}$ there exist $i$ and $j$, with $0\leq i<j\leq k-1$, such that $U_{i,r}=U_{j,r}.$  By the pigeonhole principle, there exist $r$ and $s$, with $m\leq r<s\leq m+{k\choose 2}$, and $i$ and $j$, with $0\leq i<j\leq k-1$, such that $U_{i,r}=U_{j,r}$ and $U_{i,s}=U_{j,s}$.
\begin{figure}
\begin{center}
 \includegraphics[scale=0.38]{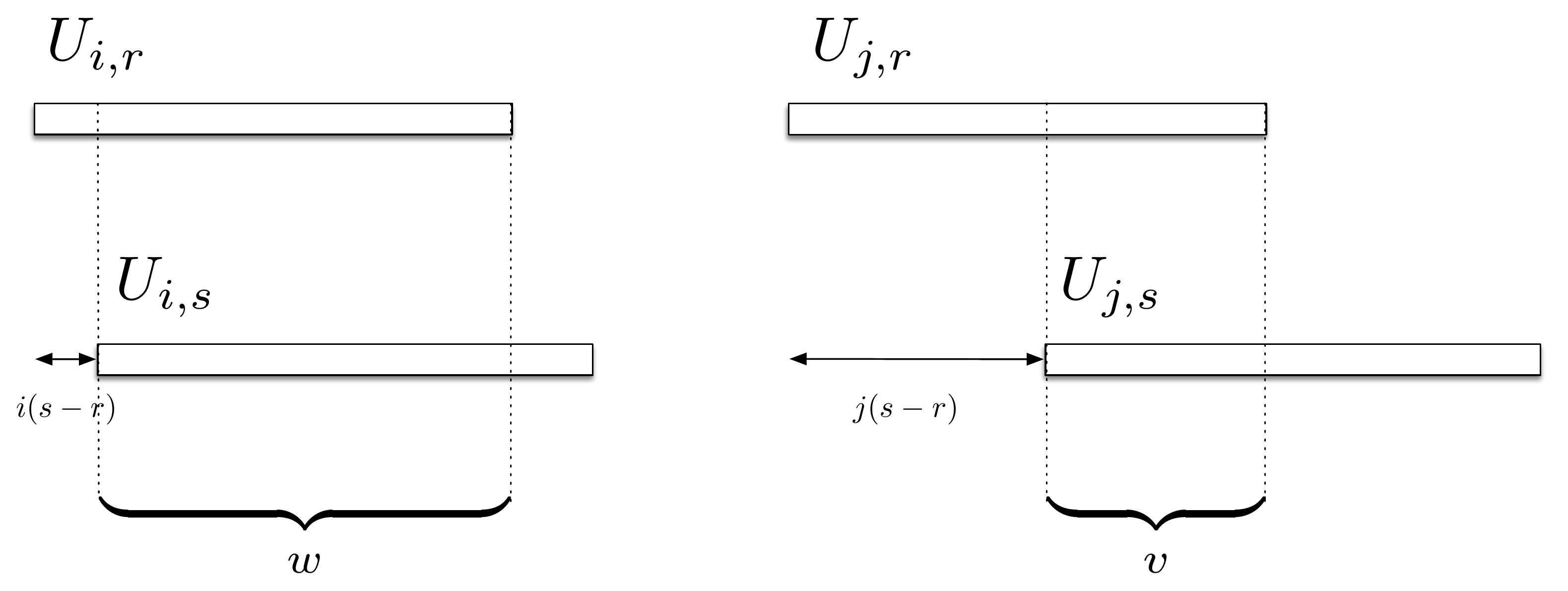}
 \caption{The proof of Theorem \ref{prop:luca}.}
 \label{fig:thm}
\end{center}
\end{figure}

Notice that $(i+1)r>is+1$ and $(j+1)r>js+1$.

Let us now set $w=x_{is+1}x_{is+2}\cdots x_{(i+1)r}$ and $v=x_{js+1}x_{js+2}\cdots x_{(j+1)r}$ (see Figure \ref{fig:thm}). We have \[|v|=(j+1)r-js<(i+1)r-is=|w|,\] whence $v$ is a border of $w.$ Writing $w=uv$, we have \[1\leq |u|=|w|-|v|=(j-i)(s-r)\leq (k-1) {k\choose 2}=M,\] and \[|w|>|v|=r-j(s-r)\geq m-(k-1){k\choose 2}=m-M>N-M=lM.\]
Thus, $|w|>l|u|$ and, by Lemma~\ref{lem:antonio}, $u^l$ is a prefix of $w$, and therefore $u^l$  is a factor of $x.$
\end{proof}

A  consequence of Theorem \ref{prop:luca} is the following.

\begin{corollary} Let $x$ be a uniformly recurrent word.  Then one, and only one, of the following cases holds:
\begin{enumerate}
 \item for every $k\in \nats$ \begin{equation}\d(\AP(x,k))\geq\frac{2}{2+k(k-1)}\ \end{equation} (and in this case $x$ is aperiodic);
 \item   there exists $r>0$ such that for every $k\in \nats$ 
\begin{equation}\d(P(x,k))\geq r\end{equation} (and in this case $x$ is periodic).
\end{enumerate}  
\end{corollary}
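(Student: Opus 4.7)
The plan is to invoke Theorem~\ref{thm:del}, which splits a uniformly recurrent $x$ into two disjoint cases---purely periodic or $\omega$-power-free---and to show each case yields precisely one of the two density alternatives, with the parenthetical periodic/aperiodic tag matching.

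If $x = u^\omega$ with $|u| = p$, then for every $k$ and every positive multiple $m$ of $p$, the prefix of length $km$ equals $(u^{m/p})^k$, a $k$-power, so $p\nats \subseteq P(x, k)$ and $\d(P(x, k)) \geq 1/p$ uniformly in $k$. Taking $r = 1/p$ establishes case~(2), and $x$ is periodic. If instead $x$ is $\omega$-power-free, Theorem~\ref{prop:luca}'s contrapositive yields $\d(\AP(x, k)) \geq 2/(2+k(k-1))$ for every $k$, giving case~(1); and since $\omega$-power-freeness forbids pure periodicity while a uniformly recurrent ultimately periodic word is purely periodic, $x$ must be aperiodic.

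For the \emph{only one} clause, the dichotomy of Theorem~\ref{thm:del} is mutually exclusive, and the aperiodic/periodic tags in the two alternatives track these branches. Additionally, I would verify the density conditions themselves clash: in the purely periodic case with $|u|=p$, a short pigeonhole argument on the $k$ residues $im \pmod p$ ($i = 0, \ldots, k-1$) shows that for every $k > p$ and every $m$, two of the $k$ length-$m$ blocks of the prefix of length $km$ must coincide, so $\AP(x,k) = \emptyset$ and~(1) fails. The reverse direction---that~(2) fails in the $\omega$-power-free setting---is the delicate step I expect to be the main obstacle; the natural attack is to apply Fine--Wilf to pairs of suitably small elements of $P(x,k)$ for large $k$ in order to impose a bounded common period on arbitrarily long initial segments of $x$, contradicting aperiodicity.
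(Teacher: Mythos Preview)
Your plan is correct and matches the paper's argument in substance: both rest on Theorem~\ref{thm:del} for the periodic/$\omega$-power-free split, Theorem~\ref{prop:luca} (in contrapositive) for the density bound in the $\omega$-power-free branch, and the observation $p\nats\subseteq P(x,k)$ in the purely periodic branch. The paper's write-up is terser than yours: it simply argues that if the inequality in~(1) fails for some $k'$, then by Theorem~\ref{prop:luca} some $u$ of bounded length has $u^l$ a factor of $x$ for all $l$, so by uniform recurrence $x=u^\omega$, whence $n|u|\in P(x,k)$ for all $n,k$ and one takes $r=1/|u|$. The ``only one'' clause is then read off from the parenthetical aperiodic/periodic labels, which are mutually exclusive; no further work is done inside the proof.

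Where you diverge is in treating mutual exclusion as an independent task and flagging ``(2) fails when $x$ is $\omega$-power-free'' as the main obstacle, to be attacked by Fine--Wilf. Under the paper's reading this step is unnecessary: once each branch of the Theorem~\ref{thm:del} dichotomy yields its density bound, the aperiodic/periodic tags already separate the cases. If one nonetheless wants a direct argument that $\d(P(x,2))>0$ forces periodicity, the paper (in the remarks immediately following the corollary) does \emph{not} use Fine--Wilf but reruns the pigeonhole-and-border mechanism of Theorem~\ref{prop:luca}: positive lower density yields a fixed $s>0$ and infinitely many $r$ with $r,r+s\in P(x,2)$; writing $w=x_1\cdots x_r$ and $v=x_{s+1}\cdots x_r$, one checks that $v$ is a border of $w$, so by Lemma~\ref{lem:antonio} the fixed prefix $u=x_1\cdots x_s$ has $u^l$ a prefix of $x$ for every $l$, giving $x=u^\omega$. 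Your Fine--Wilf route would instead require two elements $m<m'$ of $P(x,k)$ close enough that $km\geq m+m'-\gcd(m,m')$, which positive density does eventually supply for large $k$, but the bookkeeping is heavier than the paper's reuse of its own machinery.
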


  \begin{proof} According to Theorem~\ref{prop:luca}, if $(1)$ does not hold for some $k'\in \nats,$ then $x=u^\omega$ for some $u$ with $1\leq |u|\leq (k'-1){k'\choose 2}.$ Whence $n|u|\in P(x,k)$ for each $n,k\in \nats.$ 
 The result now follows by setting $r=1/|u|.$ 
 \end{proof}
 
 Note that the $\d(P(x,k))\geq r$ for every $k$ is stronger than just $\d(P(x,k))>0$.
Conversely, if $\d(P(x,k))>0$ for some $k\geq 2$, then $\d(P(x,2)) >0$, and from this it is immediate to see that $x$ is periodic. 

In fact, something stronger is true: following the notation in the proof of Theorem \ref{prop:luca}, if there exists $j\geq 1$ such that $\d\{r\mid U_{0,r}=U_{j,r}\}>0$, then $x$ is periodic (indeed, by the pigeonhole principle, our assumption implies the existence of a positive integer $s$ and infinitely many positive integers $r$ such that
$U_{0,r} =U_{j,r}$ and $U_{0,r+s}=U_{j,r+s}$; again by the pigeonhole principle, this in turn implies that there exists a nonempty factor $v$ of $x$ such that $v^n$ is a factor of $x$ for every positive integer $n$, which, as $x$ is uniformly recurrent, implies that $x$ is periodic). Note that $\d(P(x,2))>0$ is a special case of this assumption (when $j=1$).
 
Another direct consequence of Theorem \ref{prop:luca} is the following.

\begin{theorem}\label{thm:main2}
Let $x$ be a $\omega$-power-free word. Then for every $k>1$ there is an occurrence of an anti-power of order $k$ starting at every position of $x$.
\end{theorem}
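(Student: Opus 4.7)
The plan is to reduce Theorem \ref{thm:main2} to Theorem \ref{prop:luca} applied to each suffix of $x$. The key observation is that being $\omega$-power-free is inherited by suffixes: if every factor $u$ of $x$ has some $l$ with $u^l$ not a factor of $x$, and $y$ is any suffix of $x$, then every factor of $y$ is a factor of $x$, and any occurrence of $u^l$ in $y$ would be an occurrence in $x$. Hence each suffix $y$ of $x$ is itself $\omega$-power-free.

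Next, I would note that Theorem \ref{prop:luca}, applied contrapositively, says: if $y$ is an infinite word such that for every $k \geq 2$ and every $|u| \leq (k-1)\binom{k}{2}$, the word $u^l$ fails to be a factor of $y$ for some $l$, then $\d(\AP(y,k)) \geq \frac{2}{2+k(k-1)} > 0$ for every $k$. In particular the set $\AP(y,k)$ is nonempty, so some prefix of $y$ of length $km$ is a $k$-anti-power. In other words, every $\omega$-power-free infinite word begins with a $k$-anti-power for every $k > 1$.

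Combining these two steps: for any position $i$ in $x$, the suffix $y = x_i x_{i+1} x_{i+2} \cdots$ is $\omega$-power-free, hence begins with a $k$-anti-power. This $k$-anti-power is precisely an occurrence of a $k$-anti-power starting at position $i$ of $x$, which is exactly what needs to be shown.

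There is no real obstacle here; the entire content has already been established in Theorem \ref{prop:luca}. The only minor point to verify is the trivial hereditary property of $\omega$-power-freeness under taking suffixes, which follows immediately from the definition since the factor set of a suffix is contained in the factor set of the original word.
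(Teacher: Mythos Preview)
Your proposal is correct and follows essentially the same approach as the paper: both reduce to Theorem~\ref{prop:luca} applied to an arbitrary suffix, using that factors of a suffix are factors of the whole word. The only cosmetic difference is that the paper argues by direct contradiction (assume some suffix $x'$ has no $k$-anti-power prefix, so $\AP(x',k)=\emptyset$ has lower density $0$, then Theorem~\ref{prop:luca} yields a $u$ with $u^l$ a factor of $x'$, hence of $x$, for all $l$), whereas you first isolate the hereditary property of $\omega$-power-freeness and then apply the contrapositive of Theorem~\ref{prop:luca}; the content is identical.
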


\begin{proof}
 Suppose that there exists a positive integer $k$ and a suffix $x'$ of $x$ such that no prefix of $x'$ is a $k$-anti-power. Then $\AP(x',k)=\emptyset$, whence $\d(\AP(x',k))=0$. By Theorem  \ref{prop:luca}, there exists a factor $u$ of $x'$ such that $u^l$ is a factor of $x'$ for every $l\geq 1$, hence $x$ is not $\omega$-power-free.
\end{proof}

From Theorems \ref{thm:del} and \ref{thm:main2}, we derive the following corollary.

\begin{corollary}\label{cor:ur}
 Let $x$ be a uniformly recurrent aperiodic word. Then for every $k>1$ there is an occurrence of an anti-power of order $k$ starting at every position of $x$.
\end{corollary}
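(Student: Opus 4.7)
The plan is to observe that this corollary is a direct consequence of the two previously stated results, Theorem \ref{thm:del} and Theorem \ref{thm:main2}, so essentially no new work is needed. First I would invoke Theorem \ref{thm:del}, which asserts the dichotomy that every uniformly recurrent word is either purely periodic or $\omega$-power-free. Since $x$ is assumed to be aperiodic, it cannot be purely periodic, so the dichotomy forces $x$ to be $\omega$-power-free.

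Having established that $x$ is $\omega$-power-free, I would then apply Theorem \ref{thm:main2} verbatim: for every $k>1$, an anti-power of order $k$ begins at every position of $x$. This completes the proof.

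There is no real obstacle here; the technical content has already been absorbed into Theorem \ref{prop:luca} (via the density argument involving the bound $\bigl(1+\binom{k}{2}\bigr)^{-1}$ and Lemma \ref{lem:antonio}) and into the translation to Theorem \ref{thm:main2}, which handles the ``at every position'' statement by applying Theorem \ref{prop:luca} to arbitrary suffixes of $x$. The only subtlety worth mentioning in the write-up is the role of aperiodicity as the hypothesis that selects the $\omega$-power-free branch of Theorem \ref{thm:del}; uniform recurrence alone would not suffice, since purely periodic words trivially fail to contain anti-powers of sufficiently large order.
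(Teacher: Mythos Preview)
Your proposal is correct and matches the paper's approach exactly: the paper simply states that Corollary~\ref{cor:ur} follows from Theorems~\ref{thm:del} and~\ref{thm:main2}, and your write-up unpacks precisely that deduction.
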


\section{Avoiding anti-powers}

In this section we deal with the avoidability of anti-powers. 

\begin{definition}
Given $k>1$, we say that an infinite word $x$ avoids $k$-anti-powers if no factor of $x$ is a $k$-anti-power. That is, among any $k$ consecutive blocks of the same length in $x$, at least two of them are equal.
We say that an infinite word $x$ avoids anti-powers if $x$ avoids $k$-anti-powers for some $k$. \end{definition}
 
Periodic words avoid anti-powers, the period length being an upper bound for the maximal number of distinct consecutive  blocks of the same length. In the following, we discuss the avoidability of anti-powers for aperiodic words. By Corollary \ref{cor:ur}, if an aperiodic word avoids anti-powers, then it  cannot be uniformly recurrent.
 
Of course, any word containing at least two different letters cannot avoid $2$-anti-powers. For $3$-anti-powers, we have the following result.

\begin{lemma}
Let $x$ be an infinite word. If $x$ avoids $3$-anti-powers, then $x$ is a binary word.
\end{lemma}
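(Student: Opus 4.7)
The plan is to prove the contrapositive: if $x$ uses at least three distinct letters, then $x$ contains a 3-anti-power as a factor.

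First, since $x$ avoids 3-anti-powers, no three consecutive letters of $x$ are pairwise distinct (otherwise they would form a 3-anti-power of length 3). Suppose for contradiction that $x$ uses three distinct letters. I let $p$ be the smallest index with $|\{x_1,\ldots,x_p\}| = 3$ and set $c = x_p$; then $x_1 \cdots x_{p-1}$ is a binary word over some $\{a, b\}$ in which both letters appear. Since $c \notin \{a, b\}$, applying the no-three-distinct observation to $x_{p-2} x_{p-1} x_p$ forces $x_{p-2} = x_{p-1}$; call this common letter $y$ and write $y'$ for the other letter of $\{a, b\}$. Let $m \geq 2$ be the length of the maximal $y$-run ending at position $p-1$, and set $q = p - m - 1 \geq 1$; by the maximality of the run and the fact that $y'$ must appear somewhere before $p$, we have $x_q = y'$.

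The core construction is to exhibit three consecutive pairwise distinct length-$m$ blocks whose middle block is $y^m$: take $B_1 = x_{q-m+1} \cdots x_q$, $B_2 = x_{q+1} \cdots x_{q+m} = y^m$, and $B_3 = x_{q+m+1} \cdots x_{q+2m}$, so that $B_3$ begins with $c$. Then $B_1$ ends in $y' \neq y$ (so $B_1 \neq B_2$), $B_3$ begins with $c \neq y$ (so $B_3 \neq B_2$), and $B_1$ is a word over $\{a, b\}$ containing no $c$ while $B_3$ contains $c$ (so $B_1 \neq B_3$). This yields a 3-anti-power, contradicting the hypothesis. The construction fits inside $x$ precisely when $q \geq m$, so the remaining issue is the case $q < m$.

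The main obstacle is the case $q < m$, in which the binary prefix ending at $x_q$ is shorter than the $y$-run and the length-$m$ placement of $B_1$ falls off the left end of $x$. My plan here is to rescale to length-$(q+1)$ blocks starting at position 1: $B_1' = x_1 \cdots x_{q+1}$ (which still contains $y'$ at position $q$), $B_2' = x_{q+2}\cdots x_{2q+2}$ sitting inside the $y$-run (and possibly reaching $c$), and $B_3'$ following. In most subcases this adapted construction again yields pairwise distinct blocks by distinguishing which of them contains $y'$, pure $y$'s, or $c$. The genuinely delicate corner is when both $q$ and $m$ are small, prototypically $q = 1, m = 2$ (so the prefix is $y'yyc$); here neither the length-$m$ nor the length-$(q+1)$ alignment alone suffices. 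The plan for this residue is to invoke the no-three-consecutive-distinct constraint on positions past $p$, which restricts $x_{p+1}, x_{p+2}, \ldots$ to a small number of patterns over $\{y, c\}$ (plus possibly a resurgent $y'$), and then to verify by a short finite enumeration that in each allowed continuation a 3-anti-power of length 4 or 6 appears, typically at alignment $n = 2$ or $n = 3$ starting near position 1 or 2. This interplay between the prefix structure and the avoidance constraint applied to the continuation of $x$ is the principal technical difficulty.
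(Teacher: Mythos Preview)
Your setup locates exactly the right structure: the factor $y' y^m c$ at positions $q$ through $p$, which is the paper's $ab^nc$. The paper exploits this factor by extending \emph{only to the right}: for $n\geq 3$ it splits $ab^nc$ into two blocks of length roughly $n/2$ followed by a third block (beginning with $c$ or $bc$) obtained by reading further into $x$; for $n=2$ it does a short forced-extension argument. Since $x$ is right-infinite, no boundary issue ever arises.

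Your choice to place $B_1$ to the \emph{left} of the $y$-run is what creates the difficulty, and your proposed repair for $q<m$ does not work in general. You diagnose the failure as confined to ``both $q$ and $m$ small, prototypically $q=1,\ m=2$'', but in fact the length-$(q+1)$ construction breaks whenever $m$ is large compared to $q$: if $3(q+1)\leq p-1=q+m$, i.e.\ $m\geq 2q+3$, then $B_2'$ and $B_3'$ both lie entirely inside the $y$-run and equal $y^{q+1}$. For instance with $q=1,\ m=10$ (prefix $y' y^{10} c\cdots$) the length-$2$ blocks from position $1$ are $y'y,\ yy,\ yy$, so $B_2'=B_3'$; the same happens for any block length up to $4$. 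Thus the residual case is not a finite corner but an infinite family, and the planned ``short finite enumeration'' cannot close it.

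The fix is simply to abandon the leftward block and mimic the paper: take blocks of length $\lfloor m/2\rfloor+1$ starting at position $q$ (so the first block begins with $y'$, the middle block is a pure power of $y$, and the third block contains $c$), handling $m=2$ by the same forced-continuation check the paper uses for $n=2$. Since you already have $m\geq 2$, you even skip the paper's $n=1$ case.
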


\begin{proof}
Suppose $x$ avoids $3$-anti-powers and contains three different letters. Then there is a factor of $x$ of the form $u=ab^nc$
with $n\geq 1$ and $a,b,c$ distinct letters. We will extend this factor to the right and force a
3-anti-power for every $n$. For $n=1$, the word $abc$ is already an anti-power. Take now $n=2$. To avoid
3-anti-powers, $abbc$ can only be extended to $abbcb$. In the next step,
the only option is $abbcbc$, and after that $abbcbcb$. But now, the word
$abbcbcbyy'$ contains a 3-anti-power for every letters $y,y'$. Suppose now $u=ab^nc$ with $n \geq 3$. If $n$ is odd, let $m=(n-1)/2$
and note that $u$ can be factored as $ab^m\cdot b^{m+1}\cdot c$, so that $u$ 
will be extended to the right to a 3-anti-power of length $3(m+1)$. If $n$ is even, $u$ can be factored as $u=ab^m\cdot b^{m+1}\cdot bc$, so that again $u$ will be extended to the right to a 3-anti-power of length $3(m+1)$.
\end{proof}

Hence, in what follows we will suppose that $x$ is an infinite word over the binary alphabet $\A=\{0,1\}$. 

\begin{proposition}\label{prop:man}
Let $x$ be an infinite word. If $x$ avoids $3$-anti-powers, then it cannot contain a factor of the form $10^n1$ or $01^n0$ with $n>1$.
\end{proposition}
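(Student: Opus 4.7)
The strategy is to prove the contrapositive: assuming $x$ contains a factor of the form $10^n1$ for some $n\ge 2$ (the case $01^n 0$ is symmetric under the letter swap $0\leftrightarrow 1$), I will construct a $3$-anti-power in $x$. Fix an occurrence of this factor at position $j$, so that $x_j\cdots x_{j+n+1}=10^n 1$. The general idea is to consider three consecutive blocks of length $\ell$ starting at position $j$, for a carefully chosen $\ell\ge 2$, and use the hypothesis that no such triple is pairwise distinct to derive constraints on the letters of $x$ outside the factor. Note that $\ell\ge 2$ is forced by the previous lemma, since over a binary alphabet a $3$-anti-power cannot have block length $1$.

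For $n\ge 4$, I would take $\ell=\lceil(n+2)/3\rceil$ so that the window of positions $j,\ldots,j+3\ell-1$ contains both $1$'s of the factor. The first block is then $10^{\ell-1}$, the second is $0^\ell$, and the third contains the terminal $1$ at position $j+n+1$ together with the letters $x_{j+n+2},\ldots,x_{j+3\ell-1}$ of $x$ that extend beyond the factor. A short inspection shows that these three blocks are pairwise distinct in all but the single degenerate sub-case $\ell=(n+1)/2$ (which forces $n=5$), where the terminal $1$ begins the third block and the first and third blocks may coincide. This exceptional sub-case is handled by shifting the window one position to the left (using $x_{j-1}$, which exists when $j\ge 2$), which either produces a $3$-anti-power directly or forces a still more restricted pattern that yields a $3$-anti-power at a larger block length, using the freedom provided by the infinite continuation of $x$ to the right.

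For the small values $n=2$ and $n=3$, three length-$2$ blocks do not fit inside the factor, so an explicit case analysis is needed. I would examine the three consecutive length-$2$ blocks of $x$ containing the factor in each of the two possible phases, and use the hypothesis that neither triple is pairwise distinct to pin down the values of a few letters of $x$ near the factor. Iterating these constraints forces $x$ to contain a specific short factor that itself harbors a $3$-anti-power of block length $3$ (for instance, $001\cdot 100\cdot 110$ in the case $n=2$), contradicting the avoidance assumption; the case $n=3$ is handled by the analogous iteration using both length-$2$ and length-$3$ blocks. The main obstacle is to close the degenerate sub-case $\ell=(n+1)/2$ in the general argument together with the boundary situation $j=1$ where no leftward shift is available; in both situations the infinite right tail of $x$ provides enough letters to force, after a bounded amount of extension, a pairwise distinct triple of blocks.
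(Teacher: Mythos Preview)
Your approach is essentially the paper's: place three consecutive blocks so that the first two lie inside the $0$-run and the third captures the trailing~$1$. The paper, however, chooses $\ell=\lfloor n/2\rfloor$ rather than your $\ell=\lceil(n+2)/3\rceil$. Writing $n=2m$ (resp.\ $n=2m+1$) and taking $\ell=m$, the three blocks are $10^{m-1}$, $0^{m}$, and a block beginning $01\cdots$ (resp.\ $001\cdots$); for $n\ge 6$ the third block therefore always starts with $0$ yet contains a $1$, so it is automatically distinct from both other blocks \emph{regardless} of how $x$ continues to the right. This choice of the largest admissible $\ell$ (rather than the smallest) eliminates the degenerate sub-case $2\ell=n+1$ that your $\ell$ creates at $n=5$, and with it the left-shift manoeuvre and the $j=1$ boundary discussion. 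The paper then simply relegates $n\in\{2,3,4,5\}$ to a finite (computer) check. Your plan is correct, but the bookkeeping collapses once you push $\ell$ up to $\lfloor n/2\rfloor$.
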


\begin{proof}
Suppose that $x$ contains a factor of the form $u=10^n1$ with $n>1$ (the other situation is symmetric). The cases $n=2,3,4,5$ can be checked by computer, so let us suppose $n\geq 6$.

Suppose first $n$ even, and write $n=2m$. Since $u=10^{m-1}\cdot 0^m \cdot 01$,  any extension of $u$ to the right  will produce a $3$-anti-power of length $3m$. If $n$ is odd, $n=2m+1$, then we can write $u=10^{m-1}\cdot 0^{m} \cdot 001$, so that any extension of $u$ to the right  will produce a $3$-anti-power of length $3m$.
\end{proof}

\begin{corollary} 
Let $x$ be an infinite word avoiding $3$-anti-powers. Then $x$ is ultimately periodic.
\end{corollary}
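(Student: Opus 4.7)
The plan is to combine the binary restriction from the preceding lemma with the forbidden factor list from Proposition \ref{prop:man} to force an eventual alternating pattern. So I would start by noting that $x$ is over $\{0,1\}$, and that $x$ contains neither $10^n1$ nor $01^n0$ for any $n \geq 2$. I would then analyze the maximal runs of equal consecutive letters in $x$.

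The key observation is that any maximal run of $0$s that is flanked on both sides by a $1$ in $x$ must have length exactly $1$: otherwise a run of length $n \geq 2$ would exhibit $10^n1$ as a factor, contradicting Proposition \ref{prop:man}. By symmetry the same holds for maximal runs of $1$s flanked on both sides by $0$s. From this I split into three cases. If $x$ has only finitely many $0$s, then $x$ ends in $1^\omega$ and is trivially ultimately periodic; symmetrically if $x$ has only finitely many $1$s. Otherwise $x$ contains infinitely many of each letter, so every maximal run in $x$ is finite, and every maximal run other than possibly the first one is flanked on both sides; by the observation above each such run has length exactly $1$. Hence the suffix of $x$ beginning at the second maximal run is a suffix of $(01)^\omega$ or of $(10)^\omega$, and in particular $x$ is ultimately periodic.

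There is no real obstacle here: the hard work has already been done in the lemma restricting to a binary alphabet and in Proposition \ref{prop:man} eliminating long internal constant blocks. The only thing to be careful about is the case split on whether $x$ has infinitely many occurrences of both letters, so as to handle the possibility that a long run of one letter sits at the tail of $x$ without being forced to have length $1$.
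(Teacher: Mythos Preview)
Your argument is correct and is exactly the deduction the paper intends: the corollary is stated without proof, as an immediate consequence of the binary lemma and Proposition~\ref{prop:man}, and you have simply written out the routine run-length case analysis that fills this gap.
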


\begin{proposition}
There exist aperiodic  words avoiding $4$-anti-powers.
\end{proposition}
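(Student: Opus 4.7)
I plan to exhibit an explicit aperiodic binary word by placing the letter $1$ only at very sparse positions: define $x=x_1x_2x_3\cdots \in \{0,1\}^{\nats}$ by setting $x_n=1$ if $n=4^k$ for some integer $k\geq 0$, and $x_n=0$ otherwise. Thus the $1$'s of $x$ sit exactly at positions $1,4,16,64,256,\ldots$, with gaps $3\cdot 4^{k}$ between consecutive ones. These gaps are unbounded, so $x$ is not ultimately periodic, and it remains only to show that no factor of $x$ is a $4$-anti-power.

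To prove this, I argue by contradiction. Suppose $x_ix_{i+1}\cdots x_{i+4\ell-1}$ is a $4$-anti-power with blocks $B_1,B_2,B_3,B_4$ of length $\ell$. Pairwise distinctness forces at least three of the blocks to contain a $1$, for otherwise at least two of them equal $0^\ell$. Pick three such blocks $B_{j_1},B_{j_2},B_{j_3}$ with $1\leq j_1<j_2<j_3\leq 4$, and let $4^{k_1}<4^{k_2}<4^{k_3}$ be the corresponding $1$-positions in $x$. The $1$ at $4^{k_2}$ lies in $B_{j_2}$, which begins at position $i+(j_2-1)\ell$, so
\[ 4^{k_2}\geq i+(j_2-1)\ell \geq i+j_1\ell,\]
where the last step uses $j_2-1\geq j_1$. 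Since $k_3\geq k_2+1$ we have $4^{k_3}\geq 4\cdot 4^{k_2}\geq 4i+4j_1\ell$, while $4^{k_3}\leq i+4\ell-1$ because $4^{k_3}$ lies inside the factor. Combining these yields $4i+4j_1\ell\leq i+4\ell-1$, i.e.\ $3i+4(j_1-1)\ell\leq -1$, which is absurd since $i\geq 1$, $j_1\geq 1$ and $\ell\geq 1$.

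The choice of base $4$ is essential: placing the $1$'s at powers of $2$ instead would fail already at the prefix $110100010000$, which contains the $4$-anti-power $110\cdot 100\cdot 010\cdot 000$ starting at position $1$. The factor $4$ appearing in the inequality $4^{k_3}\geq 4\cdot 4^{k_2}$ is precisely what is needed to absorb the length $4\ell$ of the window and close the argument; this delicate balance is the only point that requires care. Once base $4$ is fixed, the counting above finishes the proof, and I do not foresee any further obstacle.
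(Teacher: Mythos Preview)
Your proof is correct and follows essentially the same approach as the paper: both place $1$'s at exponentially spaced positions in a sea of $0$'s and argue that at least three of the four blocks of a putative $4$-anti-power would have to contain a $1$, which the sparsity forbids. The only difference is cosmetic---the paper uses any sequence $(\alpha_i)$ with $\alpha_{i+1}\geq 5\alpha_i$ and routes the contradiction through a short auxiliary claim about windows containing two $1$'s, whereas you take $\alpha_i=4^i$ and track the three $1$'s directly; your tighter ratio $4$ works because you exploit that the middle $1$ already lies in the second block or later.
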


\begin{proof}
We exhibit an example of an aperiodic word avoiding $4$-anti-powers. 
 Let $(\alpha_i)_{i\geq 1}$ be an increasing sequence of positive integers with $\alpha_{i+1}\geq 5\alpha_i$ for each $i\geq 1.$ Let $x \in \{0,1\}^\nats$ be defined by $x_n=1$ if $n=\alpha_i$ for some $i\geq 1,$ and $x_n=0$ otherwise. Clearly $x$ is aperiodic. Moreover, given $m\geq 0$ and $n\in \nats, $ if $|x_{m+1}x_{m+2}\cdots x_{m+n}|_1\geq 2,$ then for some $i\geq 1$ \[m+1\leq \alpha_i<5\alpha_i\leq \alpha_{i+1}\leq m+n\] and hence
 $n>4\alpha_i\geq 4(m+1)$ whence $m+1<n/4.$ We claim that $x$ avoids $4$-anti-powers. In fact, suppose to the contrary that for some $m\geq 0$ and $n\in \nats$ we have $x_{m+1}\cdots x_{m+n},$ $x_{m+n+1}\cdots x_{m+2n},$ $x_{m+2n+1}\cdots x_{m+3n},$ and $x_{m+3n+1}\cdots x_{m+4n}$ are pairwise distinct. Then at least three of the four blocks must contain an occurrence of $1.$  Thus $|x_{m+n+1}\cdots  x_{m+4n}|_1\geq 2$ from which it follows that $m+n+1<3n/4$ and hence $m+1<0,$ a contradiction.
\end{proof}

The word in the previous proposition is not recurrent. It is natural to ask whether there exist recurrent words avoiding $4$-anti-powers. We do not know the answer. However, we can state the following result.

\begin{proposition}
There exist aperiodic recurrent words avoiding $6$-anti-powers.  
\end{proposition}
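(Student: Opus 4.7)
The plan is to exhibit an explicit aperiodic recurrent binary word that avoids $6$-anti-powers, building on the construction of the previous proposition for $4$-anti-powers but enriched to be recurrent. I would take
\[
x = 0^{L_1}(01)^{k_1}0^{L_2}(01)^{k_2}0^{L_3}(01)^{k_3}\cdots,
\]
where $(k_i,L_i)_{i\ge 1}$ is a sequence of pairs of positive integers chosen so that (i) the sequence is not eventually periodic, (ii) every finite consecutive pattern of pairs appears infinitely often in $(k_i,L_i)$, and (iii) the values $L_i$ grow rapidly enough along the enumeration to dominate the case analysis below. Such a sequence is readily produced by a priority-style enumeration that lists each pair $(k,L)$ infinitely often while forcing the $L$-value of later entries to exceed a prescribed lower bound.

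Aperiodicity of $x$ is immediate from (i). Recurrence follows from (ii): every finite factor of $x$ is contained in some window of the form $0^{L_i}(01)^{k_i}\cdots 0^{L_{i+r}}(01)^{k_{i+r}}$, and by (ii) each such window reappears infinitely often.

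The core step is the verification that no factor of $x$ is a $6$-anti-power. Given any six consecutive blocks of length $n$, spanning a window $W$ of length $6n$, I would split into three cases. If $W$ lies inside a single padding $0^{L_i}$, then all six blocks equal $0^n$. If $W$ lies inside a single burst $(01)^{k_i}$, then for even $n$ all six blocks coincide, since their starting positions within the burst share a common parity, and for odd $n$ the blocks take at most two distinct values, so pigeonhole produces duplicates. The remaining mixed case, in which $W$ straddles one or more burst--padding boundaries, is where condition (iii) enters: one shows that $W$ necessarily contains a run of zeros of length at least roughly $3n$, inside which two aligned blocks of length $n$ are both equal to $0^n$.

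The main obstacle is this mixed case. Condition (iii) must be tuned so that whenever a length-$6n$ window meets a padding $0^{L_i}$, that padding is itself of length at least of order $n$. The tension with condition (ii) --- which requires small values of $L$ to reappear infinitely often --- is resolved by the priority-style enumeration, in which each reappearance of a small-$L$ pair is insulated on both sides by pairs of very large $L$-values, so that no window of length $6n$ reaching the small padding can also reach any burst that would create a potential $6$-anti-power configuration.
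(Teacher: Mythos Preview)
Your proposal is not a proof but a sketch, and the sketch contains a genuine gap that you yourself flag (``the main obstacle is this mixed case''). The difficulty is not merely that the mixed case is unfinished; the resolution you propose---insulating each small-$L$ pair by very large $L$-values on both sides---is in direct tension with recurrence and cannot work as stated.

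Here is the concrete problem. Condition~(ii) forces every finite \emph{block} of consecutive pairs $((k_1,L_1),\ldots,(k_m,L_m))$ to reappear verbatim at infinitely many indices $i$. When such a block reappears as $((k_i,L_i),\ldots,(k_{i+m-1},L_{i+m-1}))$, the \emph{internal} neighbours are fixed: the right neighbour of $(k_i,L_i)=(k_1,L_1)$ is $(k_{i+1},L_{i+1})=(k_2,L_2)$, not some freshly chosen huge value. So the ``insulation on both sides'' can only be arranged at the two ends of the reappearing block, never inside it. Consequently, whatever the internal structure of your initial segment is, it recurs intact, and any window of length $6n$ fitting inside that structure sees only the fixed values $L_1,\ldots,L_m$---none of which can be ``very large relative to $n$'' once $n$ exceeds them. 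Your claim that a mixed window ``necessarily contains a run of zeros of length at least roughly $3n$'' therefore fails for such $n$.

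To see that this is not an idle worry, take $(k_j,L_j)=(j,j)$ for $j=1,2,3,4$. The corresponding prefix
\[
0\cdot 01\cdot 00\cdot 0101\cdot 000\cdot 010101\cdot 0000\cdot 01010101
\]
already contains the $6$-anti-power $0100\cdot 0101\cdot 0000\cdot 1010\cdot 1000\cdot 0010$ (block length $4$, starting at the second letter). Any recurrent sequence $(k_i,L_i)$ beginning this way reproduces this factor infinitely often, so your conditions (i)--(iii) alone do not rule out $6$-anti-powers; the unspecified ``tuning'' in (iii) is doing all the work, and you have not shown it can be done.

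By contrast, the paper's proof uses a hierarchical construction $w_0=0$, $w_n=w_{n-1}1^{3|w_{n-1}|}w_{n-1}$. The self-similarity guarantees that at \emph{every} scale $\ell$ there is a nearby $1$-block of length comparable to $\ell$ (specifically $3|w_n|$ with $|w_n|<2\ell\le |w_{n+1}|$), and this is exactly what your insulation idea is trying to achieve but cannot, because a flat burst/padding sequence constrained by recurrence cannot supply large paddings at every scale simultaneously. If you want to salvage your approach, you would need to replace the linear enumeration by a genuinely nested, scale-invariant one---at which point you are essentially reinventing the paper's construction.
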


\begin{proof}
We exhibit an example of an aperiodic recurrent word avoiding $6$-anti-powers.  Let $w_0=0$ and $w_n=w_{n-1}1^{3|w_{n-1}|}w_{n-1}$ for $n>0$. Let $w$ be the infinite word obtained as the limit of the sequence of words $(w_n)_{n\geq 1}$. Then clearly $w$ is recurrent.  

Notice that each occurrence of $w_n$ in $w$, except the first one, is preceded by $1^{3|w_n|}$.

Let $v=v_1v_2\cdots v_6$ be a non-empty factor of $w$ of length $6\ell$ for some integer $\ell$. Let $n$ be the largest integer such that $|w_n|=5^n< 2\ell$. 
We consider an occurrence of $v$ beginning at a position that follows the second occurrence of $w_n$ in $w$. We can do this since $w$ is recurrent.
By the hypothesis on $n$, no $v_i$ can intersect two occurrences of $w_n$.

Suppose first that for some $i$, $v_i$ is contained as a factor in $w_n$. By the hypothesis on $n$, neither $v_{i-1}v_i$ nor $v_iv_{i+1}$ is contained in $w_n$. Since $w_n$ is preceded and followed by $1^{3|w_n|}$, either $v_{i-3}$ and $v_{i-2}$ (if $i\geq 4$) or $v_{i+2}$ and $v_{i+3}$ (if $i< 4$) are both equal to $1^\ell$, so that $v$ cannot be an anti-power. 
 
  If instead no $v_i$ is contained as a factor in $w_n$, then one of the following cases must hold:
  
  \textit{i}) There is an occurrence of $w_n$ intersecting $v_{i}$ and the next occurrence of $w_n$ intersects $v_{i+1}$. In this case, either $v_{i-3}$ and $v_{i-2}$ or $v_{i+3}$ and $v_{i+4}$ are both equal to $1^\ell$.
  
  \textit{ii}) There is an occurrence of $w_n$ intersecting $v_{i}$ and the next occurrence of $w_n$ intersects $v_{i+2}$, so that $v_{i+1}=1^\ell$. In this case, either $v_{i-2}$ or $v_{i+4}$ must be equal to $1^\ell$.
  
  \textit{iii}) There are two consecutive blocks $v_i$, $v_{i+1}$ both equal to $1^\ell$.
  
In all cases, $v$ cannot be a $6$-anti-power. 
\end{proof}

\section{Conclusions and open problems}

We proved that every infinite word contains powers of any order or anti-powers of any order, that is, the existence of powers or anti-powers is an unavoidable regularity. This result can also be stated in the following finite version.

\begin{theorem}\label{thm:mainF}
For all integers $l>1$ and $k>1$ there exists $N=N(l,k)$ such that every word of length $N$ contains an $l$-power or a $k$-anti-power. Furthermore, for $k>2$, one has $k^2-1\leq N(k,k) \leq k^3{k\choose 2}$.
\end{theorem}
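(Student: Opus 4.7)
The plan is to derive both the existence and the upper bound on $N(l,k)$ by a finitary rerun of the argument of Theorem~\ref{prop:luca}. Suppose $x$ is a word with no $l$-power and no $k$-anti-power as a factor; set $M = (k-1)\binom{k}{2}$. Since $\AP(x,k) = \emptyset$, every interval of positive integers is disjoint from $\AP(x,k)$, so the pigeonhole step in the proof of Theorem~\ref{prop:luca} applies with any choice of $m \geq 1$. Taking $m = lM$ suffices: in the notation of that proof one then has $|v| \geq m - M = (l-1)M \geq (l-1)|u|$ (since $|u| \leq M$), so $|w| \geq l|u|$, and Lemma~\ref{lem:antonio} produces a nonempty word $u$ of length $\leq M$ with $u^l$ a factor of the prefix of $x$ of length $k(m+\binom{k}{2}) = k(lM+\binom{k}{2})$. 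As $x$ is $l$-power-free, this forces $|x| < k(lM+\binom{k}{2})$; hence $N(l,k)$ exists and is at most $k(lM+\binom{k}{2})$. Setting $l = k$ and simplifying,
\[
N(k,k) \;\leq\; k\binom{k}{2}\bigl(k(k-1)+1\bigr) \;=\; k\binom{k}{2}(k^2-k+1) \;\leq\; k^3\binom{k}{2},
\]
the last inequality from $k^2 - k + 1 \leq k^2$.

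For the lower bound $N(k,k) \geq k^2-1$, I would exhibit for each $k > 2$ an explicit binary word $w_k$ of length $k^2 - 2$ avoiding both $k$-powers and $k$-anti-powers. A natural family is $w_k = $ the length-$(k^2-2)$ prefix of $(0^r 1^r)^\omega$, with $r = r(k)$ depending on $k$: $r = 2$ for $k = 3$ (giving $w_3 = 0011001$), $r = k-1$ for even $k \geq 4$ (e.g., $w_4 = 00011100011100$), and $r = k-2$ for odd $k \geq 5$ (e.g., $w_5 = (000111)^3\cdot 00011$). Absence of $k$-powers in $w_k$ follows from a Fine--Wilf argument: runs of $w_k$ have length $r < k$, any $k$-power of block length $d$ with $2 \leq d < 2r$ would force a factor of length $\geq 2r$ to have a period strictly less than the minimal period $2r$ of $(0^r 1^r)^\omega$, and any $k$-power of block length $\geq 2r$ would be longer than $|w_k|$ since $2r \geq k$ in all three cases. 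Absence of $k$-anti-powers is verified block-length by block-length: for each admissible $d \in \{1, \ldots, k-1\}$ one uses the fact that $(0^r 1^r)^\omega$ has only $\min(2d, 2r)$ distinct length-$d$ factors and that the ``run clumps'' of residues $\{1,\ldots,r-d+1\}$ and $\{r+1,\ldots,2r-d+1\}$ modulo $2r$ (producing $0^d$ and $1^d$ respectively, when $d \leq r$) force any $k$ consecutive residues $p+id \pmod{2r}$ either to be too few in number to yield $k$ distinct blocks, or to revisit a clump.

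The main obstacle is the lower bound: the upper bound is a quantitative rerun of Theorem~\ref{prop:luca}, but the lower bound needs an explicit construction whose parameter $r$ is not a uniform function of $k$ (a parity-based case split with $k = 3$ as a further exception), and anti-power-freeness must be checked separately for each admissible block length using the periodic structure of $w_k$.
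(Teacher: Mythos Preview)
Your upper-bound argument is essentially the paper's own finitary rerun of Theorem~\ref{prop:luca}. You even choose $m=lM$ rather than the paper's $m=(k+1)M$ (for $l=k$), which works because Lemma~\ref{lem:antonio} only needs $|w|\geq l|u|$, not strict inequality; this gives the slightly sharper intermediate bound $N(k,k)\le k\binom{k}{2}(k^2-k+1)$ before relaxing to $k^3\binom{k}{2}$.

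For the lower bound the paper takes a different, more uniform route: it exhibits the single word
\[
(0^{k-1}1)^{k-2}\,0^{k-2}\,1\,0^{k-1}
\]
of length $k^2-2$, with no parity split and no special case $k=3$. Your periodic construction $w_k=$ prefix of $(0^r1^r)^\omega$ with $r\in\{2,k-2,k-1\}$ also appears to work (I checked several $k$), and your Fine--Wilf argument for $k$-power-freeness is sound. But your sketch for $k$-anti-power-freeness is incomplete: once $d\ge\lceil k/2\rceil$ the factor count $2d$ is at least $k$, so pigeonhole alone does not forbid a $k$-anti-power, and when $\gcd(d,2r)=1$ the $k$ starting residues are genuinely distinct. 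You then need that two of them land in the same ``clump'' $\{0,\ldots,r-d\}$ or $\{r,\ldots,2r-d\}$, and this is not a counting statement---the non-clump region has size $2d-2\ge k-2$, so one cannot exclude a priori a configuration with one residue per clump and the rest outside. It does hold (the arithmetic-progression structure forces a second hit in one clump), but proving it uniformly over $p$ and the relevant $d$ is real work that your sketch does not do. The paper sidesteps this entirely by choosing a non-periodic witness word tailored so that the verification is more direct.
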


The upper bound follows from the proof of Theorem \ref{prop:luca}. Indeed, let $x$ be any word of length $k^3{k\choose 2}$. 
Set $m=(k+1)(k-1){k \choose 2}$. Consider for every $r$ such that $m\leq r \leq m+{k \choose 2}$, the first $k$ consecutive blocks of length $r$ in $x$: $U_{0,r},U_{1,r},\ldots, U_{k-1,r}$. Reasoning as in the proof of Theorem \ref{prop:luca}, if $x$ does not contain $k$-anti-powers, then by the pigeonhole principle there exist $r$ and $s$, with $m\leq r<s\leq m+{k\choose 2}$, and $i$ and $j$, with $0\leq i<j\leq k-1$, such that $U_{i,r}=U_{j,r}$ and $U_{i,s}=U_{j,s}$. We then set $w=x_{is+1}x_{is+2}\cdots x_{(i+1)r}$ and $v=x_{js+1}x_{js+2}\cdots x_{(j+1)r}$ (see Figure \ref{fig:thm}), and we have that $v$ is a border of $w$. Writing $w=uv$, we have $|u|\leq (k-1){k\choose 2}$. Now, since \[|w|>|v|\geq m-(k-1){k\choose 2}=k(k-1){k\choose 2}\geq k|u|,\] we can apply Lemma~\ref{lem:antonio} and get that $u^k$ is a factor of $x$. 

Notice that bound $k^3{k\choose 2}$ on the length of $x$ is chosen to accomodate the $k$ blocks $U$, whose maximal length is $m+{k\choose 2}=k^2{k\choose 2}$.

As for the lower bound, for any $k>2$ the word $(0^{k-1}1)^{k-2}0^{k-2}10^{k-1}$ of length $k^2-2$ avoids both $k$-powers and $k$-anti-powers.

In the case of binary alphabet, it can be verified that $N(2,2)=2$,  $N(3,2)=3$, $N(2,3)=4$, $N(3,3)=9$, $N(4,3)=12$, $N(3,4)>16$, $N(4,4)>16$. We do not know how these numbers grow.
Moreover, the bounds on $N(l,k)$ given in Theorem~\ref{thm:mainF} can probably be improved by a deeper analysis of the function $N(l,k)$.

Concerning the avoidability of anti-powers, we proved that there exist words avoiding $4$ anti-powers and that there exist recurrent words avoiding $6$-anti-powers. A natural problem is therefore that of determining what is the least $k$ such that there exists a recurrent word avoiding $k$-anti-powers. 

Another possible direction of investigation is related to the possible lengths of anti-powers appearing in a word. For an aperiodic uniformly recurrent word $x$, define $ap(x,k)=\min (\AP(x,k))$, i.e., the minimum length $m$ for which the prefix of $x$ of length $km$ is a $k$-anti-power. The first values of this function for the Thue-Morse word are displayed in Table~\ref{tab:tm} (where the value of $ap(x,k)$ is the ratio between the length of the $k$-anti-power prefix and the order $k$). We wonder whether it is possible to link the behavior of the function $ap(x,k)$ to the combinatorics of the word $x$, at least for special classes of words. 
During the preparation of the final version of this paper, we became aware of an article by C. Defant showing interesting properties related to the anti-powers in the Thue-Morse word $t$, in particular showing that the  function $ap(t,k)$ grows linearly in $k$~\cite{Defant17}.

\section{Acknowledgements}

We thank Jeff Shallit for carefully reading a preliminary version of this paper. We also thank the anonymous referees for their valuable comments. 

\bibliographystyle{plain}

\end{document}